

\documentclass[letterpaper, 10 pt, conference]{ieeeconf}  

\IEEEoverridecommandlockouts                              
\overrideIEEEmargins

\usepackage[cmex10]{amsmath}  
\interdisplaylinepenalty=2500  
\usepackage{amsfonts}
\usepackage{graphicx}
\usepackage{psfrag}
\usepackage{threeparttable}  
\usepackage{booktabs}  

\newcommand{\mat}[1]{\boldsymbol{#1}}
\newcommand{\vect}[1]{\boldsymbol{#1}}
\newcommand{\snr}{\mathrm{SNR}}
\newcommand{\integers}{{\mathbb{Z}}}
\newcommand{\complex}{{\mathbb{C}}}
\newcommand{\ie}{{\it i.e.}}

\newcommand{\Hcal}{\mathcal{H}}
\newcommand{\ones}{\mathbf 1}

\newtheorem{theorem}{Theorem}
\newtheorem{lemma}[theorem]{Lemma}
\newtheorem{corollary}[theorem]{Corollary}
\newtheorem{definition}{Definition}
\newtheorem{example}{Example}

\begin{document}

\title{\LARGE \bf
Delay-Rate Tradeoff for Ergodic Interference Alignment in the Gaussian Case
}


\author{Joseph~C.~Koo, William Wu, and John~T.~Gill,~III
\thanks{J.~Koo and J.~Gill are with the Department of Electrical Engineering,
	Stanford University, Stanford, CA  94305, USA.  W.~Wu was with
	the Department of Electrical Engineering, Stanford University.
	He is now affiliated with the Jet Propulsion Laboratory,
	California Institute of Technology, Mail Stop 238-420, 4800
	Oak Grove Drive, Pasadena, CA  91109-8099.
        E-mails: {\tt\small \{jckoo, gill\}@stanford.edu, \tt\small
        William.Wu@jpl.nasa.gov}}%
}

\maketitle
\thispagestyle{empty}
\pagestyle{empty}

\begin{abstract}
In interference alignment, users sharing a wireless channel are each
able to achieve data rates of up to half of the non-interfering
channel capacity, no matter the number of users.  In an ergodic
setting, this is achieved by pairing complementary channel
realizations in order to amplify signals and cancel interference.
However, this scheme has the possibility for large
delays in decoding message symbols.  We show that delay can be
mitigated by using outputs from potentially \emph{more than two}
channel realizations, although data rate may be reduced.  We further
demonstrate the tradeoff between rate and delay via a time-sharing
strategy.  Our analysis considers Gaussian channels; an extension to
finite field channels is also possible.
\end{abstract}

\section{Introduction}
\label{sec:intro}

The technique of interference alignment has expanded what is known
about achievable rates for wireless interference channels.  First
proposed by Maddah-Ali \textit{et
al.}~\cite{maddah-ali:mimo_x_channels} and then applied to wireless
interference channels by Cadambe and
Jafar~\cite{cadambe:int_align_K_user_int_channel}, interference
alignment employs a transmission strategy that compensates for the
interference channel between transmitters and receivers.  At each
receiver, the interference components can then be consolidated into a
part of the channel that is orthogonal to the signal component.  In
fact, the interference is isolated to half of the received signal
space, while the desired signal is located in the other half---leading
to the statement that every receiver can have ``half the cake.''  This
is a significant improvement over every receiver receiving only $1/K$
of the cake, which is the case if standard orthogonalization
techniques are used (where $K$ is the number of transmitter-receiver
pairs).

Interference alignment in an ergodic setting is studied in Nazer
\textit{et al.}~\cite{nazer:ergodic_int_align}, and provides the basis
for our analysis.  Using their Gaussian achievable scheme, we delve
deeper into the associated decoding delays and consider how delays may
be reduced, although at the cost of decreased rate.  Even though the
analysis in \cite{nazer:ergodic_int_align} additionally considers a
scheme for finite field channels (also similar to the method in
\cite{jeon:capacity_class_multisource_relay_networks}), we defer to
the reader the extension of our analysis to the finite field case.

Our approach for reducing delays is to consider interference alignment
where alignment may require more than one additional instance of
channel fading.  In \cite{nazer:ergodic_int_align}, interference is
aligned by transmitting the same message symbol during complementary
channel realizations.  In contrast, our approach will utilize multiple
channel realizations (potentially more than two), which when summed
together yield cancelled interference (and amplified signal).  We call
such a set of channel matrices an \emph{alignment set}---which will be
more formally defined later.  Using multiple channel realizations to
align interference has also been studied in
\cite{nazer:int_align_general_message_sets} for different cases of
receiver message requirements; however, we instead consider how to
utilize these many channel realizations to reduce the delay of
individual messages at each receiver.  At first glance, it may seem
that using alignment sets of larger sizes will only increase the
delay; but if we allow alignment using alignment sets of multiple
sizes simultaneously, then we can decrease the time required for a
message symbol to be decoded.

We now give a simple example of an alignment set and show the concept
of ergodic interference alignment.
\begin{example}
\label{ex:align_set_4}
Consider a $3$-user Gaussian interference channel with channel
response given by $\vect{Y} = \mat{H} \vect{X} + \vect{Z}$, where
$\vect{X}$ denotes the transmitted symbols (with power constraint
$E[|X_k|^2] \leq P$ for each user $k = 1,2,3$), $\mat{H}$ is the
channel matrix, $\vect{Z}$ is independently and identically
distributed zero-mean unit-variance additive white Gaussian noise, and
$\vect{Y}$ gives the received symbols.  Suppose the following channel
matrices occur at time steps $t_0$, $t_1$, $t_2$, and $t_3$,
respectively:
\[
\small
\begin{array}{ll}
\mat{H}^{(0)} = \left[ \begin{array}{rrr}
1 & -1 & 1 \\ 1 & 1 & -1 \\ -1 & 1 & 1
\end{array} \right] &
\mat{H}^{(1)} = \left[ \begin{array}{rrr}
1 & -1 & -1 \\ -1 & 1 & 1 \\ 1 & 1 & 1
\end{array} \right] \\ \\
\mat{H}^{(2)} = \left[ \begin{array}{rrr}
1 & 1 & -1 \\ -1 & 1 & 1 \\ -1 & -1 & 1
\end{array} \right] &
\mat{H}^{(3)} = \left[ \begin{array}{rrr}
1 & 1 & 1 \\ 1 & 1 & -1 \\ 1 & -1 & 1
\end{array} \right]
\end{array}
\mbox{.}
\]
If the same [complex] vector~$\vect{X}$ is sent at all these times,
then the sum of the non-noise terms is given by $\sum_{i=0}^3
\mat{H}^{(i)} \vect{X} = 4 [X_1, X_2, X_3]^T$ because $\sum_{i=0}^3
\mat{H}^{(i)} = 4 \mat{I}$.  By utilizing all four channel
realizations together, the signals (diagonal entries) are amplified,
while the interference terms (off-diagonal entries) are cancelled, so
this collection of matrices is an alignment set.  As long as a
receiver knows when an alignment set occurs, then in order to decode
his own message, he does not need to know the channel fades to the
other receivers.
\end{example}

Inferring from \cite{cadambe:multiple_access_outerbounds} or
\cite{cadambe:parallel_int_channels_not_always_separable}, the astute
reader may notice that in the example, the sum capacity when sending
across each channel matrix separately is actually greater than the
alignment rate---a capacity of $4 \log (1 + 3P)$ for separate coding,
compared to a rate of $3 \log (1 + 4 P)$ by using the indicated
interference alignment scheme.  However, when the number of
transmitters (and receivers) exceeds the number of alignment channel
realizations, then the rate benefits of using alignment sets start to
become evident.  Aligning across $4$ channel realizations with $K$
transmitter-receiver pairs, a rate of $K \log (1 + 4P)$ is achievable,
which can quickly eclipse the separate-coding sum capacity of $4 \log
(1 + K P)$.  Moreover, as we will discuss, the benefit of using larger
alignment sets is not in the rate, but rather in the reduction of
decoding delay.

In the next section, we will formally describe the interference
alignment setup, and define our notions of rate and delay.  In
Section~\ref{sec:align_complement}, we will take a brief look at the
conventional ergodic interference alignment scheme, by considering the
rate and delay inherent in aligning interference using complementary
channel realizations.  Section~\ref{sec:align_multiple} will give the
main result of this work, which is the analysis of rate and delay when
aligning interference by utilizing multiple channel realizations.  We
will also give a scheme for trading off the rate and the delay.  We
conclude in Section~\ref{sec:conclusion}.

\section{Preliminaries}
\label{sec:prelim}

The setup is the same as the $K$-user interference channel of
\cite{nazer:ergodic_int_align} and
\cite{nazer:int_align_general_message_sets}, where there are $K$
transmitter-receiver pairs.  The number of channel uses is $n$.  For
the $k$-th transmitter, $k = 1,\ldots,K$, each message~$w_k$ is chosen
independently and uniformly from the set $\{1, 2, \ldots, 2^{n
\tilde{R}_k}\}$ for some $\tilde{R}_k \geq 0$.  Only transmitter~$k$
knows message~$w_k$.  Let $\mathcal{X}$ be the channel input and
output alphabet.  The message~$w_k$ is encoded into the $n$ channel
uses using the encoder $\mathcal{E}_k : \{1, 2, \ldots, 2^{n
\tilde{R}_k}\} \to \mathcal{X}^n$.  The output of the encoding
function is the transmitted symbol $X_k(t) = [\mathcal{E}_k(w_k)]_t$
at time~$t$, for $t = 1,\ldots,n$.

The communication channel undergoes fast fading, so the channel fades
change at every time step.  At time~$t$, the channel
matrix~$\mat{H}(t)$ has complex entries $[\mat{H}(t)]_{kl} =
h_{kl}(t)$ for $k,l = 1,\ldots,K$.  In this model, all transmitters
and receivers are given perfect knowledge of $\mat{H}(t)$ for all
times~$t$.  We call $\Hcal$ to be the set of all possible channel
fading matrices.

The message symbol~$X_k(t)$ is transmitted at time~$t$.  We assume
zero delay across the channel, so the channel output seen by
receiver~$k$ at time~$t$ is the received symbol
\begin{equation}
Y_k(t) = \sum_{l=1}^K h_{kl}(t) X_l(t) + Z_k(t)
\mbox{,}
\label{eq:channel_model}
\end{equation}
where $Z_k(t)$ is an additive noise term.  Each receiver~$k$ then
decodes the received message symbols according to $\mathcal{D}_k :
\mathcal{X}^n \to \{1, 2, \ldots, 2^{n \tilde{R}_k}\}$, to produce an
estimate~$\hat{w}_k$ of $w_k$.
\begin{definition}
The ergodic rate tuple $(R_1, R_2, \ldots, R_K)$ is \emph{achievable}
if for all $\epsilon > 0$ and $n$ large enough, there exist channel
encoding and decoding functions $\mathcal{E}_1, \ldots,
\mathcal{E}_K$, $\mathcal{D}_1, \ldots, \mathcal{D}_K$ such that
$\tilde{R}_k > R_k - \epsilon$ for all $k = 1,2,\ldots,K$, and $P
\left( \bigcup_{k=1}^K \{\hat{w}_k \ne w_k\} \right) < \epsilon$.
\end{definition}

We assume a Gaussian channel with complex channel inputs and outputs,
so $\mathcal{X} = \complex$.  Each transmitter~$k$ has power
constraint
\[
E[ |X_k(t)|^2] \leq \snr_k
\mbox{,}
\]
where $\snr_k \geq 0$ is the signal-to-noise ratio.  The channel
coefficients~$h_{kl}(t)$, $k,l = 1,\ldots,K$, are independently and
identically distributed both in space and time.  We require also that
$h_{kl}$ be drawn from a distribution which is symmetric about zero,
so $P(h_{kl}) = P(-h_{kl})$.  The noise terms~$Z_k(t)$ are drawn
independently and identically from a circularly-symmetric complex
Gaussian distribution; thus, $Z_k(t) \sim \mathcal{CN}(0,1)$.

\subsection{Channel Quantization}
\label{subsec:channel_quant}

In this exposition, we consider quantized versions of the channel
matrix.  For some quantization parameter $\gamma > 0$, let
$Q_\gamma(h_{kl})$ be the closest point in $(\integers + j \integers)
\gamma$ to $h_{kl}$ in Euclidean distance.  The $\gamma$-quantized
version of the channel matrix $\mat{H} \in \complex^{K \times K}$ is
given by the entries $[\mat{H}_\gamma]_{kl} = Q_\gamma(h_{kl})$.

Our scheme uses typical realizations of the channel matrices.  For any
$\epsilon > 0$, choose the maximum magnitude $\tau > 0$ such that
$P(\bigcup_{k,l} \{|h_{kl}| > \tau \}) < \frac{\epsilon}{3}$.  Throw
out all time indices with any channel coefficient magnitude larger
than $\tau$. Let $\gamma$ and $\delta$ be small positive constants.
Then choose $n$ large enough so that the typical set of sequences
$A_\delta^n$ of channel matrices has probability $P(A_\delta^n) \geq
1-\frac{\epsilon}{3}$ (see \cite{nazer:ergodic_int_align} for
details).  Because this sequence of $\gamma$-quantized channel
matrices is $\delta$-typical, the corresponding rate decrease is no
more than a fraction of $\delta$.

In the remainder of this paper, we will only deal with the
$\gamma$-quantized channel matrices~$\mat{H}_\gamma$, so we drop the
subscript~$\gamma$; all further occurrences of $\mat{H}$ refer to the
quantized channel realization~$\mat{H}_\gamma$.  We also redefine the
channel alphabet~$\Hcal$ to only include the typical set of quantized
channel matrices, which has cardinality $|\Hcal| = (2 \tau /
\gamma)^{2K^2}$.

\subsection{Aligning Interference}
\label{subsec:align_int}

In the standard interference alignment approach, the interference is
aligned by considering the channel matrix~$\mat{H}$ in tandem with its
complementary matrix~$\mat{H}^c$, where
\[
\mat{H}^c = \left[ \begin{array}{r@{}lr@{}lcr@{}l}
  & h_{11}    & - & h_{12}    & \cdots & - & h_{1K} \\
- & h_{21}    &   & h_{22}    & \cdots & - & h_{2K} \\
  & \; \vdots &   & \; \vdots & \ddots &   & \; \vdots \\
- & h_{K1}    & - & h_{K2}    & \cdots &   & h_{KK} \\
\end{array} \right]
\mbox{.}
\]
That is, $\mat{H}^c$ has entries $h_{kl}$ for $k = l$ and $-h_{kl}$
for $k \ne l$.

For alignment using more channel realizations, we define the concept
of an alignment set.
\begin{definition}
\label{def:align_set}
An \emph{alignment set} of size $m \in 2 \integers^+$ is a collection
of matrices $\mathcal{A} = \{\mat{H}^{(0)}, \mat{H}^{(1)}, \ldots,
\mat{H}^{(m-1)}\}$ such that the diagonal entries (signal terms) are
the same:
\begin{equation}
h^{(0)}_{kk} = h^{(1)}_{kk} = \cdots = h^{(m-1)}_{kk}
\label{eq:align_set_def_signal}
\end{equation}
for $k = 1,\ldots,K$, and the sum of interference terms cancel:
\begin{equation}
\left| h^{(0)}_{kl} \right| = \left| h^{(1)}_{kl} \right| = \cdots =
\left| h^{(m-1)}_{kl} \right|
\label{eq:align_set_def_interference_magnitude}
\end{equation}
and
\begin{eqnarray}
\left| \{h^{(i)}_{kl} = h^{(0)}_{kl} \; | \; i=1,\ldots,m-1\} \right|
& = & \frac{m}{2} - 1  \label{eq:align_set_def_interference_pos} \\
\left| \{h^{(i)}_{kl} = -h^{(0)}_{kl} \; | \; i=1,\ldots,m-1\} \right|
& = & \frac{m}{2}  \label{eq:align_set_def_interference_neg}
\end{eqnarray}
for $k = 1,\ldots,K$, $l = 1,\ldots,K$, $k \ne l$.  Within an
alignment set, the sum of channel matrices, $\mat{B} =
\sum_{i=0}^{m-1} \mat{H}^{(i)}$, will have entries $b_{kk} = m
h_{kk}^{(0)}$ and $b_{kl} = 0$, for $k,l = 1,\ldots,K$, $k \ne l$.
We denote $\mathcal{A}_{\mat{H}}$ to be an alignment set of which
$\mat{H}$ is a member.
\end{definition}
 
We have seen some examples of alignment sets already.  Any channel
realization~$\mat{H}$ and its complement~$\mat{H}^c$ together form an
alignment set of size~$2$.  Additionally, the set of matrices given in
Example~\ref{ex:align_set_4} is an alignment set of size~$4$.

Since channel transmission is instantaneous, the only delay considered
is due to waiting for the appropriate channel realizations before a
message symbol can be decoded.
\begin{definition}
The \emph{average delay} of an ergodic interference alignment scheme
is the expected number of time steps between the first instance a
message symbol~$\vect{X}$ is sent and the time until $\vect{X}$ is
recovered at the receiver.
\end{definition}

If $\vect{X}(t_0)$ is sent at time~$t_0$ but can not be decoded until
the appropriate interference alignment occurs at time~$t_1$, then the
delay is $t_1 - t_0$.  Note that the delay does not consider the
decoding of the entire message~$w_k$---just the symbols transmitted at
each individual time, $X_k(t)$, $k = 1,\ldots,K$.

\section{Interference Alignment using Complementary Channel Realization}
\label{sec:align_complement}

The method of interference alignment via sending the same channel
input vector when a complementary channel realization occurs is given
in \cite{nazer:ergodic_int_align}.  Call $R_k^{(2)}$ the achievable
rate for interference alignment using complements (\ie, requiring two
channel realizations before decoding each message symbol).

\begin{lemma}[{\cite[Theorem~3]{nazer:ergodic_int_align}}]
\label{thm:rate_align_2}
An achievable rate tuple by aligning using complementary channel
realizations is
\[
\textstyle
R_k^{(2)} = \frac{1}{2} E[\log (1 + 2 |h_{kk}|^2 \snr_k)]
\] 
for $k = 1,\ldots,K$, where the expectation is over the distribution
of channel fades~$h_{kk}$ drawn from the matrices in $\Hcal$.
\end{lemma}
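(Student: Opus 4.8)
The plan is to reproduce the ergodic alignment argument of Nazer \textit{et al.}: pair each channel realization with its complement, have every transmitter send the \emph{same} symbol at both times of a pair, and add the two received symbols at each receiver so that the interference cancels and the desired signal is doubled. First I would set up the pairing. Since each $h_{kl}$ is drawn from a distribution symmetric about zero, a quantized matrix $\mat{H}\in\Hcal$ and its complement $\mat{H}^c$ are equiprobable; hence in a $\delta$-typical block of $n$ channel uses the number of occurrences of $\mat{H}$ and of $\mat{H}^c$ agree up to a fraction $O(\delta)$. For (almost) every time index $t$ with $\mat{H}(t)=\mat{H}$, fix a distinct later index $\sigma(t)$ with $\mat{H}(\sigma(t))=\mat{H}^c$; the $O(\delta)$ unmatched indices are discarded, costing a vanishing fraction of the rate. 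This bookkeeping is exactly what the typical-set construction of Section~\ref{sec:prelim} is designed to support, so I would lean on \cite{nazer:ergodic_int_align} for the precise estimates.

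Next, for a matched pair $(t,\sigma(t))$ set $X_l(t)=X_l(\sigma(t))$ for every $l$. At receiver~$k$, using $[\mat{H}^c]_{kk}=h_{kk}$ and $[\mat{H}^c]_{kl}=-h_{kl}$ for $l\ne k$,
\begin{align*}
Y_k(t) &= h_{kk}X_k(t) + \sum_{l\ne k} h_{kl}X_l(t) + Z_k(t),\\
Y_k(\sigma(t)) &= h_{kk}X_k(t) - \sum_{l\ne k} h_{kl}X_l(t) + Z_k(\sigma(t)).
\end{align*}
Adding these, the interference cancels and we obtain the effective point-to-point channel
\[
\tilde Y_k = 2h_{kk}X_k + \tilde Z_k,\qquad \tilde Z_k = Z_k(t)+Z_k(\sigma(t)) \sim \mathcal{CN}(0,2),
\]
whose received SNR is $|2h_{kk}|^2\snr_k/2 = 2|h_{kk}|^2\snr_k$, so its instantaneous fading-aware capacity is $\log(1+2|h_{kk}|^2\snr_k)$.

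Then I would invoke standard achievability for the fading AWGN channel: with an i.i.d.\ Gaussian codebook of power $\snr_k$ spread over the $\approx n/2$ matched super-symbols, and joint-typicality (or nearest-codeword) decoding at receiver~$k$ — who knows all channel matrices and the pairing $\sigma$ — the rate $\frac{2}{n}\sum_t \log(1+2|h_{kk}(t)|^2\snr_k)$ is achievable with error probability $\to 0$. Because the quantized $h_{kk}$ are i.i.d.\ and the sequence is typical, this empirical average concentrates on $E[\log(1+2|h_{kk}|^2\snr_k)]$, while the $1/(n/2)$ normalization against $n$ total channel uses produces the leading $\tfrac12$. Folding together the $O(\delta)$ unmatched-index loss, the $O(\epsilon)$ atypicality loss, and the quantization loss — all of which can be driven below any prescribed $\epsilon$ for $n$ large — yields $\tilde R_k > R_k^{(2)} - \epsilon$ with $R_k^{(2)} = \tfrac12 E[\log(1+2|h_{kk}|^2\snr_k)]$, as required by the achievability definition.

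The main obstacle is the pairing/bookkeeping step, namely making rigorous that almost every time index in a \emph{finite} block can be matched to a complementary realization occurring within the same block, bounding the residual unmatched indices, and verifying that the discarded typicality and quantization fractions all vanish together; this is the technical heart and is precisely the estimate already packaged in \cite{nazer:ergodic_int_align}. By contrast, the signal/noise computation, the concentration of the empirical average, and the single-user achievability are routine once the pairing $\sigma$ is in hand.
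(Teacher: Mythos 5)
Your proposal is correct and follows essentially the same route as the source the paper relies on: the paper gives no proof of its own for this lemma, citing it directly from Theorem~3 of \cite{nazer:ergodic_int_align}, and your reconstruction (typicality-based pairing of $\mat{H}$ with $\mat{H}^c$, repeating the input, summing the two outputs to get an effective point-to-point channel with SNR $2|h_{kk}|^2\snr_k$ over two channel uses) is exactly that argument. Your identification of the pairing/bookkeeping and the residual quantization error as the technical heart, deferred to the cited reference, is consistent with how the paper treats the result.
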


When a channel realization~$\mat{H}$ occurs, then the sent message
symbol is decoded when the complementary channel
realization~$\mat{H}^c$ occurs.  Let $d^{(2)}$ denote the average
delay  between channel realizations $\mat{H}$ and $\mat{H}^c$.
\begin{lemma}
\label{thm:delay_align_2}
When all channel realizations are equally likely, the average delay
incurred by interference alignment with complementary channel
realizations is $d^{(2)} = |\Hcal|$.
\end{lemma}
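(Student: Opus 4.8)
The plan is to identify the decoding delay of a single message symbol with the waiting time of an elementary geometric random variable. First I would fix an arbitrary realization $\mat{H} \in \Hcal$ occurring at time $t_0$, when the transmitters send $\vect{X}$, and note that its complement $\mat{H}^c$ is a single, well-defined element of $\Hcal$: since the fading law is symmetric about zero, a quantized channel sequence is $\delta$-typical if and only if the sequence obtained by flipping the signs of its off-diagonal entries is, so the (redefined) typical alphabet $\Hcal$ is closed under complementation and $\mat{H}^c \in \Hcal$. By the alignment construction of Section~\ref{subsec:align_int}, the symbol $\vect{X}$ sent at $t_0$ is recovered at the time $t_1 = \min\{t > t_0 : \mat{H}(t) = \mat{H}^c\}$.

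Next I would invoke the modeling assumption that the matrices $\mat{H}(t)$ are i.i.d.\ in time, together with the hypothesis of the lemma that every element of $\Hcal$ is equally likely, to get $P(\mat{H}(t) = \mat{H}^c) = 1/|\Hcal|$ independently across $t$. Hence the delay $D = t_1 - t_0$ is geometric with success probability $p = 1/|\Hcal|$, namely $P(D = d) = (1-p)^{d-1}p$ for $d \ge 1$, and therefore $E[D] = 1/p = |\Hcal|$.

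Finally, because this waiting-time distribution depends only on the fact that $\mat{H}^c$ is one fixed target matrix --- and not on which $\mat{H}$ appeared at $t_0$ --- averaging over the (uniform) law of $\mat{H}(t_0)$ leaves $E[D]$ unchanged, so $d^{(2)} = |\Hcal|$. I do not expect a genuine obstacle here; the only point deserving a remark is the degenerate case in which every off-diagonal entry of $\mat{H}$ quantizes to zero so that $\mat{H}^c = \mat{H}$, but even then the scheme waits for the next occurrence of the complementary realization, so the same geometric computation --- and the value $|\Hcal|$ --- applies uniformly.
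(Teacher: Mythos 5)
Your proof is correct and follows essentially the same route as the paper's: the waiting time for the single target matrix $\mat{H}^c$ is geometric with parameter $1/|\Hcal|$, hence the mean delay is $|\Hcal|$. The extra remarks on closure of $\Hcal$ under complementation and the degenerate case $\mat{H}^c = \mat{H}$ are reasonable elaborations but do not change the argument.
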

\begin{proof}
Each channel realization is equally likely at each time. The time
until $\mat{H}^c$ occurs is a geometric random variable with parameter
$P(\mat{H}^c) = 1/|\Hcal|$.  The average delay is $|\Hcal|$.
\end{proof}

Note that the delay~$d^{(2)}$ can be quite large.  Using our
quantization scheme, $d^{(2)} = |\Hcal| = (2 \tau / \gamma)^{2K^2}$.

\section{Interference Alignment using Multiple Channel Realizations}
\label{sec:align_multiple}

This section will  focus on using alignment sets of sizes $m = 2$ and
$m = 4$.  Extensions for larger alignment sets will be discussed in
Section~\ref{subsec:align_multiple_larger_align_sets}.

For ease of analysis, we assume that each channel
realization~$\mat{H}$ is equally likely, although the ideas presented
may be readily extended to the cases where the distribution of channel
realizations is non-uniform.  However, for this particular
interference alignment scheme to work, all channel realizations within
the same alignment set must be equiprobable: for an alignment set
$\mathcal{A}_{\mat{H}} = \{\mat{H}, \mat{H}^{(1)}, \mat{H}^{(2)},
\ldots, \mat{H}^{(m-1)}\}$, we require that $P(\mat{H}) =
P(\mat{H}^{(1)}) = P(\mat{H}^{(2)}) = P(\mat{H}^{(m-1)})$.
Fortunately, this holds since we assume that channel entries are drawn
from distributions that are symmetric about zero.

\subsection{First-to-Complete Alignment}
\label{subsec:align_multiple_first_to_complete}

We call the following scheme for achieving lower delay the
\emph{first-to-complete} scheme, which is essentially a
coupon-collecting race between an alignment set of size~$2$ and an
alignment set of size~$4$.  For some channel realization $\mat{H} \in
\Hcal$ (occurring at a time~$t_0$)---since the entire future of
channel realizations is known---we can collect the realizations
occurring at future times $t > t_0$.  Now we say that an alignment
set~$\mathcal{A}_{\mat{H}}$ of size~$4$ has been \emph{completed} once
all matrices $\tilde{\mat{H}} \in \mathcal{A}_{\mat{H}}$ have been
realized.  If $\mat{H}^c$ occurs before $\mathcal{A}_{\mat{H}}$ is
completed, then pair up $\mat{H}$ with that realization of
$\mat{H}^c$.  Otherwise, group together $\mat{H}$ with the other
members of the alignment set $\mathcal{A}_{\mat{H}}$.

We derive the achievable rate by separately finding the rates when
decoding using alignment sets of different sizes, and then weighting
these rates by the probabilities that a particular-sized set is
completed before the other.  From \cite{nazer:ergodic_int_align}, if
$\mat{H}$ at time~$t_0$ is paired with $\mat{H}^c$ at time~$t_1$, then
the same symbol vector~$\vect{X}(t_0)$ is transmitted at both times
$t_0$ and $t_1$.  Since this is alignment with channel complements,
the rate $R_k = \frac{1}{2} E[\log (1 + 2|h_{kk}|^2 \snr_k)] -
\epsilon$ is achievable with probability $1 - \epsilon$.

Now we find the rate when $\mat{H}$ at time $\hat{t}_0 = t_0$ is
instead grouped with the members of its size-$4$ alignment
set~$\mathcal{A}_{\mat{H}}$.  Assume that the channel realizations of
the other members of the alignment set occur at times $\hat{t}_1$,
$\hat{t}_2$, and $\hat{t}_3$, respectively.  In the scheme, we send
the same message symbol~$X_k(\hat{t}_0)$ at times $\hat{t}_0$,
$\hat{t}_1$, $\hat{t}_2$, and $\hat{t}_3$.  The channel outputs are
\begin{equation}
Y_k(t) = h_{kk}(t) X_k(\hat{t}_0) + \sum_{l \ne k} h_{kl}(t)
X_l(\hat{t}_0) + Z_k(t)
\label{eq:channel_output_align_4}
\end{equation}
for $t = \hat{t}_0, \hat{t}_1, \hat{t}_2, \hat{t}_3$.  From the
alignment set definition, we know $h_{kk}(\hat{t}_0) =
h_{kk}(\hat{t}_1) = h_{kk}(\hat{t}_2) = h_{kk}(\hat{t}_3)$ and
$h_{kl}(\hat{t}_0) + h_{kl}(\hat{t}_1) + h_{kl}(\hat{t}_2) +
h_{kl}(\hat{t}_3) = 0$ for $k = 1,\ldots,K$ and $l \ne k$.  Thus, the
signal-to-interference-plus-noise ratio of the channel from
$X_k(\hat{t}_0)$ to $Y_k(\hat{t}_0) + Y_k(\hat{t}_1) + Y_k(\hat{t}_2)
+ Y_k(\hat{t}_3)$ is at least
\[
\frac{\snr_k ((4 |\Re(h_{kk})| - 2 \gamma)^2 + (4 |\Im(h_{kk})| - 2
\gamma)^2)}{4 + (2 \gamma)^2 \sum_{l \ne k} \snr_l}
\mbox{.}
\]
Taking the channel quantization parameter $\gamma \to 0$, the SINR is
$4 |h_{kk}|^2 \snr_k$, which gives the rate (as $\tau \to \infty$):
\begin{equation}
\textstyle R_k = \frac{1}{4} E[\log (1 + 4|h_{kk}|^2 \snr_k)] -
\frac{2 \epsilon}{3}
\mbox{.}
\label{eq:rate_limit_align_4}
\end{equation}
Thus there exist $\gamma$ and $\tau$ such that we achieve $R_k >
\frac{1}{4} E[\log (1 + 4|h_{kk}|^2 \snr_k)] - \epsilon$ with
probability $1 - \epsilon$ when aligning using an alignment set of
size $4$.\footnotemark
\footnotetext{Higher rates may be possible by optimizing power
allocations, for example via water-filling.  Here we only consider
rates achievable using equal-power allocations.}

Recall that $\mat{H}$ at time~$t_0$ is only grouped with the channel
realizations of the alignment set which completes first, so that the
realizations corresponding to the other alignment sets are \emph{not}
associated with $\mat{H}$ and can be used for some other
transmissions.  For example, if $\mat{H}^c$ occurs between times
$\hat{t}_1$ and $\hat{t}_2$ (\ie, $t_0 < \hat{t}_1 < t_1 < \hat{t}_2 <
\hat{t}_3$), then since the transmitter knows the sequence of channel
realizations in advance, it may avoid utilizing $\mat{H}(\hat{t}_1)$
to send $\vect{X}(t_0)$, which would become a wasted transmission when
$\mat{H}^c$ occurs at time $t_1$.  In this example, decoding is via
channel complements, so $\vect{X}(t_0)$ is sent during times $t_0$ and
$t_1$, but never during times $\hat{t}_1$, $\hat{t}_2$, and
$\hat{t}_3$.

We now determine the probability that the first-to-complete scheme
decodes using the alignment set of size~$4$ rather than the alignment
set of size~$2$.  This can be computed by considering a Markov chain
with the following states:
\begin{center}
\footnotesize
\begin{tabular}{ll}
$s_{-1}$: & Decode using $\mat{H}$ and its complement, $\mat{H}^c$ \\
$s_0$:    & No matches yet to any alignment set \\
$s_1$:    & First match with size-$4$ alignment set \\
$s_2$:    & Second match with size-$4$ alignment set \\
$s_3$:    & Third match with size-$4$ alignment set, so decode using
$\mathcal{A}_{\mat{H}}$ 
\end{tabular}
\end{center}
The Markov chain is shown in Figure~\ref{fig:markov_chain_2_4}.
States $s_{-1}$ and $s_3$ are absorbing.  Because this is a success
runs Markov chain~\cite{taylor:stochastic_modeling}, its absorption
probabilities and hitting times are known.  The probability of
decoding via the alignment set of size~$4$ is the probability of
absorption at state~$s_3$ starting from state~$s_0$, and is computed
to be $\beta_4 = 1/4$.  Note that $\beta_4$ does not depend on the
number of possible channel realizations, $|\Hcal|$.  This is intuitive
since matrices not belonging to an alignment set do not affect the
probability that one set completes before another.

\begin{figure}[tbp]
\centering
\psfrag{s0}[c][c]{$s_{-1}$}
\psfrag{s1}[c][c]{$s_0$}
\psfrag{s2}[c][c]{$s_1$}
\psfrag{s3}[c][c]{$s_2$}
\psfrag{s4}[c][c]{$s_3$}
\psfrag{p00}[c][c]{\small $1$}
\psfrag{p10}[c][c]{\small $\frac{1}{|\Hcal|}$}
\psfrag{p11}[c][c]{\small $1 - \frac{4}{|\Hcal|} $}
\psfrag{p12}[c][c]{\small $\frac{3}{|\Hcal|} $}
\psfrag{p20}[c][c]{\small $\frac{1}{|\Hcal|}$}
\psfrag{p22}[c][c]{\small $1 - \frac{3}{|\Hcal|}$}
\psfrag{p23}[c][c]{\small $\frac{2}{|\Hcal|}$}
\psfrag{p30}[c][c]{\small $\frac{1}{|\Hcal|}$}
\psfrag{p33}[c][c]{\small $1 - \frac{2}{|\Hcal|}$}
\psfrag{p34}[c][c]{\small $\frac{1}{|\Hcal|}$}
\psfrag{p44}[c][c]{\small $1$}
\includegraphics[width=3.0in]{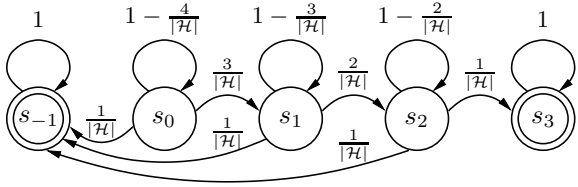}
\caption{Success runs Markov chain associated with first-to-complete
alignment.  States indicate progress towards completion of the
alignment sets.  Quantities above the arrows indicate transition
probabilities.}
\label{fig:markov_chain_2_4}
\end{figure}

\begin{lemma}
\label{thm:rate_first_compl_2_4}
An achievable rate tuple for the first-to-complete scheme has rates
(for all $k = 1,\ldots,K$):
\begin{eqnarray*}
R_k^{(2,4)}
& = & \textstyle \frac{3}{8} E[ \log (1 + 2 |h_{kk}|^2 \snr_k)] \nonumber \\
&   & \textstyle {+} \: \frac{1}{16} E[ \log (1 + 4 |h_{kk}|^2 \snr_k)]
\mbox{.}
\end{eqnarray*}
\end{lemma}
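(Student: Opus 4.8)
The plan is to assemble $R_k^{(2,4)}$ out of two facts already established in this section: the per-alignment-set rates for sizes $2$ and $4$, and the absorption probability $\beta_4 = 1/4$ of the success-runs Markov chain of Figure~\ref{fig:markov_chain_2_4}. First I would record the two building-block rates. If the starting realization $\mat{H}$ (carrying $\vect{X}(t_0)$) is ultimately decoded against its complement $\mat{H}^c$, then by Lemma~\ref{thm:rate_align_2} user~$k$ can reliably communicate at any rate below $R_k^{(2)} = \frac12 E[\log(1+2|h_{kk}|^2\snr_k)]$. If instead $\mat{H}$ is grouped with the other three members of a size-$4$ alignment set $\mathcal{A}_{\mat{H}}$, the SINR computation leading to \eqref{eq:rate_limit_align_4} shows that user~$k$ can reliably communicate at any rate below $R_k^{(4)} = \frac14 E[\log(1+4|h_{kk}|^2\snr_k)]$ after letting $\gamma \to 0$ and $\tau \to \infty$.

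Next I would pin down which set is used and with what probability. In both building blocks the relevant expectation is over the \emph{marginal} law of $h_{kk}$, since membership in an alignment set constrains only the off-diagonal entries and leaves the diagonal fades distributed exactly as in $\Hcal$. By the first-to-complete rule, the alignment set assigned to a given anchor realization is decided by which set completes first, which is precisely the absorption event of the chain in Figure~\ref{fig:markov_chain_2_4}: starting from $s_0$, absorption at $s_3$ (size-$4$ wins) has probability $\beta_4 = 1/4$, hence absorption at $s_{-1}$ (the complement $\mat{H}^c$ arrives first) has probability $\beta_2 = 1-\beta_4 = 3/4$. Because $\beta_4$ does not depend on $|\Hcal|$, no $|\Hcal|$-dependence leaks into the final rate.

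Finally I would combine the pieces by time-sharing. Split each user's message into two independent sub-messages, one carried on the transmissions that end up aligned by a size-$2$ set and one on those aligned by a size-$4$ set, coding each sub-message with the corresponding building-block scheme. By stationarity and ergodicity of the typical quantized channel process — together with the regenerative structure of the scheme, in which each anchor realization and the realizations it consumes form an essentially i.i.d.\ cycle — the long-run fraction of channel uses carrying the size-$2$ sub-stream tends to $\beta_2$ and that of the size-$4$ sub-stream tends to $\beta_4$. A union bound over the two sub-schemes keeps the overall error probability below $\epsilon$, and the achievable rate is the weighted sum $R_k^{(2,4)} = \beta_2 R_k^{(2)} + \beta_4 R_k^{(4)} = \frac38 E[\log(1+2|h_{kk}|^2\snr_k)] + \frac1{16} E[\log(1+4|h_{kk}|^2\snr_k)]$.

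The step I expect to be the main obstacle is this last one: making the time-sharing rigorous requires verifying that the timeline partitions (asymptotically) into size-$2$ and size-$4$ portions in the ratio $\beta_2:\beta_4$ with a vanishing fraction of wasted channel uses, and that within each portion the diagonal fades indeed see the marginal distribution. This is exactly where one must confirm that the intuitive average of the two completion-conditioned rates is genuinely achievable rather than merely heuristic; once that is in place, substituting $R_k^{(2)}$, $R_k^{(4)}$, and $\beta_4 = 1/4$ is just bookkeeping.
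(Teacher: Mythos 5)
Your proposal is correct and follows essentially the same route as the paper: weight the size-$2$ rate $\frac12 E[\log(1+2|h_{kk}|^2\snr_k)]$ by $1-\beta_4=3/4$ and the size-$4$ rate $\frac14 E[\log(1+4|h_{kk}|^2\snr_k)]$ by $\beta_4=1/4$ and add. The paper's own proof is exactly this one-line weighted combination (it leaves the rigor of the time-sharing/sub-message splitting implicit), so your additional discussion of the regenerative structure is a more careful elaboration of the same argument rather than a different approach.
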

\begin{proof}
Because decoding via the size-$2$ alignment set occurs $1 - \beta_4$
of the time, and decoding via the size-$4$ alignment set occurs
$\beta_4$ of the time, an achievable rate is $R_k^{(2,4)} =
\frac{1}{2} (1-\beta_4) E[ \log (1 + 2 |h_{kk}|^2 \snr_k)] +
\frac{1}{4} \beta_4 E[ \log (1 + 4 |h_{kk}|^2 \snr_k)]$.  Plugging in
$\beta_4 = 1/4$ gives the result.
\end{proof}

\begin{lemma}
\label{thm:delay_first_compl_2_4}
For the first-to-complete scheme, the average decoding delay is
$d^{(2,4)} = (3/4) |\Hcal| = (3/4) d^{(2)}$.
\end{lemma}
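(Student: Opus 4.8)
The plan is to recognize that the average delay is exactly the expected time to absorption of the success-runs Markov chain of Figure~\ref{fig:markov_chain_2_4}, started from state~$s_0$, and then to evaluate that expectation by a short first-step analysis. The only modeling point I would spell out first: when $\mat{H}$ is realized at time~$t_0$, the symbol vector~$\vect{X}(t_0)$ is recovered at the first time the first-to-complete race finishes --- either when $\mat{H}^c$ is realized (absorption at $s_{-1}$, giving a size-$2$ group) or when the last of the three not-yet-seen members of $\mathcal{A}_{\mat{H}}$ is realized (absorption at $s_3$). Since the chain advances by one channel use per step, and realizations lying in neither the size-$2$ nor the size-$4$ set correspond to self-loops, the number of steps from $s_0$ to absorption equals $t_1 - t_0$ in the definition of delay; hence $d^{(2,4)}$ equals the expected number of steps from $s_0$ to absorption.

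Next I would write $N = |\Hcal|$ and let $T_i$ denote the expected number of steps to absorption starting from $s_i$, so that $T_{-1} = T_3 = 0$. Using the transition probabilities displayed in the figure (a jump into an absorbing state contributes only the unit time increment), the first-step equations are
\[
T_2 = 1 + \Bigl(1 - \tfrac{2}{N}\Bigr) T_2, \qquad
T_1 = 1 + \Bigl(1 - \tfrac{3}{N}\Bigr) T_1 + \tfrac{2}{N}\, T_2, \qquad
T_0 = 1 + \Bigl(1 - \tfrac{4}{N}\Bigr) T_0 + \tfrac{3}{N}\, T_1 .
\]
Solving by back-substitution gives $T_2 = N/2$, then $T_1 = 2N/3$, and finally $T_0 = 3N/4$. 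Therefore $d^{(2,4)} = T_0 = (3/4)|\Hcal|$, and combining with $d^{(2)} = |\Hcal|$ from Lemma~\ref{thm:delay_align_2} yields $d^{(2,4)} = (3/4)\, d^{(2)}$.

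The computation itself is routine; the step that needs care is the modeling claim in the first paragraph --- in particular, that $\mat{H}^c$ and the three remaining members of the size-$4$ alignment set are distinct matrices, so that the total out-transition probability from $s_0$ is $4/N$ rather than something smaller, and that the ``first-to-complete'' rule is faithfully represented by making both $s_{-1}$ and $s_3$ absorbing (once one set completes, the realizations belonging only to the other set are diverted to unrelated transmissions and so do not prolong this symbol's delay). An alternative to the explicit recursion would be to invoke the closed-form mean-absorption-time formulas for success-runs chains from~\cite{taylor:stochastic_modeling}, but the three-line recursion above is shorter and self-contained.
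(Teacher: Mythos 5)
Your proposal is correct and follows essentially the same route as the paper: the paper also identifies $d^{(2,4)}$ with the mean hitting time of the success-runs Markov chain of Figure~\ref{fig:markov_chain_2_4} and cites "a simple computation" (formalized in the Appendix as $d = \vect{e}_0^T(\mat{I}-\mat{Q})^{-1}\ones$, which is exactly your first-step recursion in matrix form). Your explicit back-substitution $T_2 = N/2$, $T_1 = 2N/3$, $T_0 = 3N/4$ matches the transition probabilities in the figure and supplies the computation the paper leaves implicit.
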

\begin{proof}
The delay until either alignment set is completed is the mean hitting
time until one of the corresponding absorption states is reached in
the Markov chain of Figure~\ref{fig:markov_chain_2_4}.  A simple
computation for the hitting time yields $d^{(2,4)} = (3/4) |\Hcal|$.
\end{proof}

\subsection{Delay-Rate Tradeoff}
\label{subsec:align_multiple_tradeoff}

Although the first-to-complete scheme achieves lower delay than
interference alignment using only complements, it has the
drawback of having lower rate.  By using time-sharing, we can achieve
any delay~$d$ such that $(3/4) |\Hcal| = d^{(2,4)} \leq d \leq
d^{(2)} = |\Hcal|$, and every user $k \in \{1,\ldots,K\}$ will still
have increased data rate over that of $R_k^{(2,4)}$.

In the time-sharing scheme, with probability $1 - \alpha$ where $0
\leq \alpha \leq 1$, pair up $\mat{H}$ with the first instance of
$\mat{H}^c$ which occurs later in time; this is alignment using only
complements.  With probability~$\alpha$, however, perform the
first-to-complete scheme: pair up $\mat{H}$ with $\mat{H}^c$ only if
$\mat{H}^c$ occurs before any alignment set of size~$4$ is completed;
otherwise, group $\mat{H}$ with the size-$4$ alignment set which
completes first.

\begin{theorem}
\label{thm:rate_time_sharing}
The achievable rate when time-sharing with probability~$\alpha$ of
using the first-to-complete scheme is
\begin{eqnarray*}
R_k(\alpha)
& = & (1-\alpha) R_k^{(2)} + \alpha R_k^{(2,4)} \\
& = & \textstyle \frac{1}{2} \left( 1 - \frac{\alpha}{4} \right) E[
      \log (1 + 2 |h_{kk}|^2 \snr_k)] \nonumber \\ 
&   & \textstyle {+} \: \frac{\alpha}{16} E[ \log (1 + 4 |h_{kk}|^2 \snr_k)]
\mbox{.}
\end{eqnarray*}
\end{theorem}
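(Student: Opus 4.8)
The plan is to derive Theorem~\ref{thm:rate_time_sharing} as a pure time-sharing (convexification) corollary of the two schemes already analyzed, and then to carry out the elementary bookkeeping that puts the resulting rate into the stated closed form. All of the achievability content sits in the observation that the complement-only scheme of Section~\ref{sec:align_complement} and the first-to-complete scheme of Section~\ref{subsec:align_multiple_first_to_complete} can be interleaved on a per-symbol basis without either one losing its guaranteed rate.

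Concretely, I would proceed as follows. (i) Describe the randomized scheme precisely: each time a fresh channel realization $\mat{H}$ occurs and a new symbol vector $\vect{X}$ is introduced, draw an independent $\mathrm{Bernoulli}(\alpha)$ coin; on a $0$ route $\vect{X}$ to the complement-only rule (pair $\mat{H}$ with the next $\mat{H}^c$), and on a $1$ route it to the first-to-complete rule. Because the entire channel sequence $\{\mat{H}(t)\}$ is known noncausally, fix in advance a deterministic scheduling convention so that each realized matrix is consumed by at most one symbol, and note that the coins are independent of $\{\mat{H}(t)\}$. (ii) Condition on the coins: restricted to the $(1-\alpha)$-fraction of symbols routed to the complement rule, Lemma~\ref{thm:rate_align_2} (with the argument of Section~\ref{sec:align_complement}) gives that rate $R_k^{(2)}-\epsilon$ is achievable for those symbols with probability at least $1-\epsilon$; restricted to the $\alpha$-fraction routed to first-to-complete, Lemma~\ref{thm:rate_first_compl_2_4} gives $R_k^{(2,4)}-\epsilon$. (iii) Concatenate the two sub-codes into one code of block length $n$ and average over the coins: by the law of large numbers the fraction of symbols in each mode concentrates at $1-\alpha$ and $\alpha$, so the overall rate for user $k$ approaches $(1-\alpha)R_k^{(2)}+\alpha R_k^{(2,4)}$, and a union bound over the finitely many error events keeps the total error probability below any prescribed tolerance for $n$ large. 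This establishes the first equality $R_k(\alpha)=(1-\alpha)R_k^{(2)}+\alpha R_k^{(2,4)}$.

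Finally I substitute $R_k^{(2)}=\tfrac12 E[\log(1+2|h_{kk}|^2\snr_k)]$ from Lemma~\ref{thm:rate_align_2} and $R_k^{(2,4)}=\tfrac38 E[\log(1+2|h_{kk}|^2\snr_k)]+\tfrac1{16}E[\log(1+4|h_{kk}|^2\snr_k)]$ from Lemma~\ref{thm:rate_first_compl_2_4}, and collect terms: the coefficient of $E[\log(1+2|h_{kk}|^2\snr_k)]$ becomes $\tfrac{1-\alpha}{2}+\tfrac{3\alpha}{8}=\tfrac{4-\alpha}{8}=\tfrac12\bigl(1-\tfrac{\alpha}{4}\bigr)$, while the coefficient of $E[\log(1+4|h_{kk}|^2\snr_k)]$ is $\tfrac{\alpha}{16}$, which is exactly the claimed expression.

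The step I expect to require the most care is (i)--(iii): verifying that running the two sub-schemes against one shared realized channel sequence does not let one scheme ``steal'' a channel realization the other needs, so that the per-mode rates of Lemmas~\ref{thm:rate_align_2} and~\ref{thm:rate_first_compl_2_4} really do survive the coupling. I would handle this by using the noncausal knowledge of $\{\mat{H}(t)\}$ to fix, ahead of time, a one-to-one assignment of realized matrices to symbols consistent with both rules on their respective symbols, and by noting that since the mode-selection coins are i.i.d.\ and independent of the channel, a standard typicality/ergodicity argument over a long horizon recovers the stated average rate for each user~$k$. Everything downstream of that is the one-line coefficient computation above.
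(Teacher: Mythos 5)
Your proposal is correct and follows exactly the route the paper intends: the paper's own proof is simply ``Evident,'' meaning the standard time-sharing convexification of $R_k^{(2)}$ and $R_k^{(2,4)}$ followed by substitution of Lemmas~\ref{thm:rate_align_2} and~\ref{thm:rate_first_compl_2_4}, and your coefficient computation $\tfrac{1-\alpha}{2}+\tfrac{3\alpha}{8}=\tfrac12\bigl(1-\tfrac{\alpha}{4}\bigr)$ matches the stated expression. Your added care in steps (i)--(iii) about disjointly scheduling the two sub-schemes over the shared channel sequence is a legitimate filling-in of details the paper leaves implicit, not a different approach.
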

\begin{proof}
Evident.
\end{proof}

\begin{theorem}
\label{thm:delay_time_sharing}
The average delay when time-sharing is
\[
d(\alpha) = (1-\alpha) d^{(2)} + \alpha d^{(2,4)} = (1 - \alpha/4)
|\Hcal|
\mbox{.}
\]
\end{theorem}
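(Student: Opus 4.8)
The plan is to apply the law of total expectation, conditioning on the outcome of the time-sharing coin. First I would note that in the time-sharing scheme the decision to run the pure-complement scheme (with probability $1-\alpha$) or the first-to-complete scheme (with probability $\alpha$) is made by an independent randomization that does not depend on the realized channel sequence. Hence, conditioned on the pure-complement branch being selected for the symbol vector $\vect{X}(t_0)$, the number of time steps until $\vect{X}(t_0)$ can be decoded is distributed exactly as in Section~\ref{sec:align_complement}, so its conditional mean is $d^{(2)} = |\Hcal|$ by Lemma~\ref{thm:delay_align_2}. Conditioned on the first-to-complete branch being selected, the delay is distributed exactly as in Section~\ref{subsec:align_multiple_first_to_complete}, with conditional mean $d^{(2,4)} = (3/4)|\Hcal|$ by Lemma~\ref{thm:delay_first_compl_2_4}.

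Next I would assemble these pieces. By the law of total expectation,
\[
d(\alpha) = (1-\alpha)\, d^{(2)} + \alpha\, d^{(2,4)},
\]
and substituting $d^{(2)} = |\Hcal|$ and $d^{(2,4)} = (3/4)|\Hcal|$ yields
\[
d(\alpha) = (1-\alpha)|\Hcal| + \tfrac{3}{4}\alpha|\Hcal| = \bigl(1 - \tfrac{\alpha}{4}\bigr)|\Hcal|,
\]
which is the claimed formula.

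The only point requiring care — conceptual rather than computational — is the independence/consistency claim: one must check that processing the realization $\mat{H}(t_0)$ under one branch does not reuse or constrain the realizations the other branch would require, so that the two conditional delay distributions are genuinely those analyzed in Lemmas~\ref{thm:delay_align_2} and~\ref{thm:delay_first_compl_2_4} and not perturbed by the coexistence of both schemes. This holds because the transmitter knows the entire future channel sequence and, as observed in Section~\ref{subsec:align_multiple_first_to_complete}, simply refrains from transmitting $\vect{X}(t_0)$ during realizations that would be wasted; the branch selected for the symbol at $t_0$ is therefore self-contained, and the computation above goes through.
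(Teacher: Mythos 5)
Your proof is correct and is simply the explicit version of what the paper dismisses as ``Evident'': the convex combination $(1-\alpha)d^{(2)} + \alpha d^{(2,4)}$ is exactly the law-of-total-expectation decomposition over the time-sharing coin, with the values plugged in from Lemmas~\ref{thm:delay_align_2} and~\ref{thm:delay_first_compl_2_4}. Your added remark about the two branches not interfering with each other is a reasonable point of care, but it does not change the argument, which matches the paper's intent.
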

\begin{proof}
Evident.
\end{proof}

\begin{corollary}
\label{thm:time_sharing_better}
The average delay, when time-sharing between the first-to-complete
scheme (using alignment sets of both sizes $2$ and $4$) and
channel-complement alignment, is lower than the average delay when
using only complements.
\end{corollary}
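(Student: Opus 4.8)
The plan is to read the corollary off directly from Theorem~\ref{thm:delay_time_sharing}, with Lemma~\ref{thm:delay_align_2} supplying the baseline. First I would recall that the time-sharing scheme is parametrized by $\alpha \in [0,1]$, the probability of running the first-to-complete scheme rather than pure complement alignment, and that its average delay has already been computed as $d(\alpha) = (1 - \alpha/4)|\Hcal|$. Pure complement alignment is the special case $\alpha = 0$, with delay $d^{(2)} = d(0) = |\Hcal|$ by Lemma~\ref{thm:delay_align_2}.

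The key step is then the elementary observation that $d(\alpha)$ is a strictly decreasing affine function of $\alpha$: since $d(\alpha) - d^{(2)} = -(\alpha/4)|\Hcal| < 0$ for every $\alpha \in (0,1]$, any choice of a strictly positive time-sharing parameter yields $d(\alpha) < d^{(2)}$, which is exactly the claimed strict reduction in average delay. I would also point out the two endpoints for context: as $\alpha \to 1$ the delay decreases to $(3/4)|\Hcal| = d^{(2,4)}$, recovering the first-to-complete delay of Lemma~\ref{thm:delay_first_compl_2_4} as the best value attainable in this family, while $\alpha \to 0$ returns the pure-complement delay.

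For completeness I would pair this with Theorem~\ref{thm:rate_time_sharing} to note the companion fact alluded to in the surrounding discussion: for $\alpha \in [0,1)$ one has $R_k(\alpha) = (1-\alpha)R_k^{(2)} + \alpha R_k^{(2,4)} > R_k^{(2,4)}$ whenever $R_k^{(2)} > R_k^{(2,4)}$, so the delay reduction over pure complements is obtained while staying strictly above the first-to-complete rate point. This is not needed for the corollary itself but makes clear that the delay gain is not bought at the expense of dropping below $R_k^{(2,4)}$.

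I do not anticipate any real obstacle: the statement is an immediate consequence of the affine delay formula $d(\alpha) = (1-\alpha/4)|\Hcal|$ already established, and its entire content is the monotonicity of that function in $\alpha$. All the substantive work — the success-runs Markov-chain analysis giving $\beta_4 = 1/4$ and the hitting-time computation giving $d^{(2,4)} = (3/4)|\Hcal|$ — has been carried out earlier in the section.
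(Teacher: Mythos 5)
Your argument is correct and is essentially identical to the paper's own proof: both simply invoke the affine formula $d(\alpha) = (1-\alpha/4)|\Hcal|$ from Theorem~\ref{thm:delay_time_sharing} and note that any $\alpha > 0$ gives $d(\alpha) < |\Hcal| = d^{(2)}$. The additional remarks about the endpoints and the rate comparison are fine but not needed.
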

\begin{proof}
By choosing any $\alpha > 0$, we get delay $d(\alpha)$ strictly less
than $|\Hcal| = d^{(2)}$.
\end{proof}

The reduced delay is an intuitive result since the first-to-complete
scheme allows additional opportunities to align, without disallowing
existing opportunities.

\subsection{Extension to Larger Alignment Sets}
\label{subsec:align_multiple_larger_align_sets}

We now extend our analysis to more general collections of alignment
sets.  Consider a finite tuple of positive even numbers $I =
(m_1,m_2,\ldots,m_{|I|})$, possibly with repetitions.  We generalize
first-to-complete alignment by using non-overlapping alignment sets
with sizes dictated by the entries of $I$.  As soon as all members of
any particular alignment set have been seen, we say that that
alignment set has been completed; we transmit and decode using the
particular alignment set.  As an example, the first-to-complete
alignment scheme given in the first part of this section corresponds
to $I = (2,4)$.  For the case of a general tuple $I$, the process is
identical to the multiple subset coupon collecting problem of Chang
and Ross~\cite{chang:multiple_subset_coupon_collecting}, in which
coupons are repeatedly drawn with replacement until any one of several
preordained subsets of coupons have been collected. 

To compute the achievable rates $(R_1^{I}, R_2^{I}, \ldots, R_K^{I})$
and delay~$d^I$ associated with running first-to-complete alignment
among $I$-sized alignment sets, we construct the associated Markov
chain.  The state vector $\vect{s} = (s_1, s_2, \ldots, s_{|I|})$ is
defined so that element $s_i$ counts how many members of the $i$-th
alignment set have already occurred, excluding the initial
matrix~$\mat{H}$.  Initially, the Markov chain is at state $\vect{s} =
\vect{0}$, since no alignment set member aside from $\mat{H}$ has yet
been realized.  At each time~$t$, if $\mat{H}(t)$ is a member of the
${\hat{\imath}}$-th alignment set and has not yet been realized, then
increment $s_{\hat{\imath}} := s_{\hat{\imath}} + 1$.  When
$s_{\hat{\imath}} = m_{\hat{\imath}} - 1$ for some $\hat{\imath}$,
this means that the ${\hat{\imath}}$-th alignment set (of
size~$m_{\hat{\imath}}$) has been completed.  The Markov chain enters
an absorbing state, and the receiver decodes.  Let $V$ denote the set
of absorbing states.  The state transition probabilities are
\[
\small
P_{\vect{s}, \vect{s}'} =
\left\{ \begin{array}{ll}
\frac{m_{\hat{\imath}} - 1 - s_{\hat{\imath}}}{|\Hcal|}
  & s'_{\hat{\imath}} = s_{\hat{\imath}} + 1 ~\mbox{for some}
    ~\hat{\imath},\ldots \\
  & \quad s'_i = s_i ~\mbox{for all} ~i \ne \hat{\imath}, ~\vect{s}
    \not\in V \\
1 - \sum_i \frac{m_i - 1 - s_i}{|\Hcal|}
  & \vect{s}' = \vect{s}, ~\vect{s} \not\in V \\
1 & \vect{s}' = \vect{s}, ~\vect{s} \in V ~\mbox{(absorption)}\\
0 & \mbox{otherwise}
\end{array} \right.
\mbox{.}
\]
Let $\beta_{m}^{I}$ be the probability that the first completed
alignment set is the alignment set of size $m \in I$.  Equivalently,
$\beta_m^{I}$ is the probability that the Markov chain reaches the
absorption state corresponding to the completion of a specific
size-$m$ alignment set.  These absorption probabilities can be
computed via matrix inversion (see the Appendix or Taylor and
Karlin~\cite{taylor:stochastic_modeling} for more details).
Table~\ref{table:sample_absorp_prob_delays} gives example values for
$\beta_m^{I}$.

\begin{table}[tbp]
\centering
\begin{threeparttable}
\caption{Absorption Probabilities and Delays\tnote{$\dagger$}}
\label{table:sample_absorp_prob_delays}
\begin{tabular*}{0.9\columnwidth}{@{\extracolsep{\fill}} lllll @{}}
\toprule
\multicolumn{1}{c}{Set sizes} & \multicolumn{3}{c}{Absorption
probability} & \multicolumn{1}{c}{Delay} \\
\cmidrule{1-1} \cmidrule{2-4} \cmidrule{5-5}
\multicolumn{1}{c}{$I$} & \multicolumn{1}{c}{$\beta_{m_1}^I$} &
\multicolumn{1}{c}{$\beta_{m_2}^I$} &
\multicolumn{1}{c}{$\beta_{m_3}^I$} & \multicolumn{1}{c}{$d^I$} \\
\midrule
$(2,4)$ & $0.75$ & $0.25$ & & $0.75 |\Hcal|$ \\
$(2,6)$ & $0.8333$ & $0.1667$ & & $0.8333 |\Hcal|$ \\
$(2,4,4)$ & $0.6429$ & $0.1786$ & $0.1786$ & $0.6429 |\Hcal|$ \\
$(2,4,6)$ & $0.6944$ & $0.2083$ & $0.0972$ & $0.6944 |\Hcal|$ \\
$(4,4)$ & $0.5$ & $0.5$ & & $1.2167 |\Hcal|$ \\
$(4,6)$ & $0.625$ & $0.375$ & & $1.3988 |\Hcal|$ \\
$(4,8)$ & $0.7$ & $0.3$ & & $1.4972 |\Hcal|$ \\
$(4,4,4)$ & $0.3333$ & $0.3333$ & $0.3333$ & $0.9790 |\Hcal|$ \\
$(6,10)$ & $0.6429$ & $0.3571$ & & $1.8607 |\Hcal|$ \\
\bottomrule
\end{tabular*}
\begin{tablenotes}
\item[$\dagger$] For values to be valid, $|\Hcal| \geq 1 +
\sum_{i=1}^{|I|} (m_i - 1)$ must hold.
\end{tablenotes}
\end{threeparttable}
\end{table}

Following a similar argument as in
Lemma~\ref{thm:rate_first_compl_2_4}, the rate for receiver $k \in
\{1,\ldots,K\}$ by using a first-to-complete scheme with specific
alignment sets of sizes drawn from $I$ is
\[
R_k^{I} = \sum_{m \in I} \frac{1}{m} \beta_m^{I} E[ \log (1 + m
|h_{kk}|^2 \snr_k)]
\mbox{.}
\]
We now incorporate time-sharing and describe the delay-rate tradeoff.
Let $\mathcal{I}$ be a finite collection of these tuples~$I$; that is,
$\mathcal{I} \subseteq \{I = (m_1,\ldots,m_{|I|}) \; | \; m_i \in 2
\integers^+\}$.  We can do time-sharing between first-to-complete
schemes, with sizes drawn from $I \in \mathcal{I}$, according to the
vector $\vect{\alpha} = (\alpha_{I_1}, \alpha_{I_2}, \ldots,
\alpha_{I_{|\mathcal{I}|}})$ where $\sum_{I \in \mathcal{I}} \alpha_I
= 1$ and $\alpha_I \geq 0$ for all $I \in \mathcal{I}$.  The rate will
be
\begin{equation}
R_k(\vect{\alpha})  = \sum_{I \in \mathcal{I}} \alpha_I R_k^{I}
\mbox{.}
\label{eq:rate_larger_align_sets_time_sharing}
\end{equation}
Alternatively, to be explicit about the rates due to alignment sets of
particular sizes, the rate can also be written as
\[
\small
R_k(\vect{\alpha}) = \sum_{m \in 2 \integers^+} \left( \sum_{I \in
\mathcal{I} \, : \, m \in I} \alpha_I \beta_m^{I} \right) \frac{1}{m}
E[ \log(1 + m |h_{kk}|^2 \snr_k)]
\mbox{.}
\]
The average delay using alignment sets of sizes $I = (m_1, m_2,
\ldots, m_{|I|})$ is equal to the mean absorption time for the Markov
chain.  From \cite{chang:multiple_subset_coupon_collecting}, by using
Poisson embedding, this delay can be computed as\footnotemark
\begin{equation}
d^I = |\Hcal| \int_0^1 \frac{1}{1-u} \prod_{i=1}^{|I|} (1 - u^{m_i -1})
\; du 
\mbox{.}
\label{eq:delay_larger_align_sets}
\end{equation}
Table~\ref{table:sample_absorp_prob_delays} gives average delays for
some representative collections of alignment sets.  Then the delay
using time-sharing is
\begin{equation}
d(\vect{\alpha}) = \sum_{I \in \mathcal{I}} \alpha_I d^{I}
\mbox{,}
\label{eq:delay_larger_align_sets_time_sharing}
\end{equation}
which is linear in the number of possible channel realizations,
$|\Hcal|$.
\footnotetext{This evaluates to an inclusion-exclusion sum of harmonic
numbers $H_n$:
\[
d^I = |\Hcal| \left[ \sum_{U \subseteq I, U \neq \emptyset}
(-1)^{1-|U|} H_{\scriptstyle \left( -|U| + \sum_{m \in U} m \right)}
\right]
\mbox{.}
\]
The delay can also be expressed analytically using the digamma
function $\Psi$, giving
\begin{eqnarray*}
d^I
& = & |\Hcal| \sum_{U \subseteq I} (-1)^{1-|U|} \Psi \left(1-|U| +
\sum_{m \in U} m \right) \\
& = & |\Hcal| \left[
\gamma + \sum_{m_1 \in I} \Psi (m_1) - \sum_{m_1,m_2 \in I} \Psi
(-1+m_1+m_2) \right. \nonumber \\ 
&   & \left. {+} \: \sum_{m_1,m_2,m_3 \in I} \Psi (-2+m_1+m_2+m_3) -
\cdots \right]
\mbox{,}
\end{eqnarray*}
where $\gamma$ is the Euler-Mascheroni constant.  Also, from
\cite{chang:multiple_subset_coupon_collecting}, we can find the
variance of this delay, as well as the average delay when alignment
sets overlap.}

From Table~\ref{table:sample_absorp_prob_delays}, we can make an
observation regarding the computed absorption probabilities and
associated delays.  When the first alignment set has size~$2$, notice
that $d^I = \beta_2^I |\Hcal|$.  This holds for any tuple $I$ which
contains an alignment set of size~$2$ (see Appendix).

\subsection{Further Considerations}
\label{subsec:align_multiple_further}

In this analysis, we only consider alignment sets that do not share
any common matrices.  However, as the number of allowable sizes,
$|I|$, grows larger, this condition will become harder to fulfill
since there will be greater potential for collisions.  Finding tuples
of alignment sets such that there are no overlapping channels is an
avenue for future work.  One thing to note is that because only
$2^{K(K-1)}$ matrices satisfy $h_{kk}^{(i)} = h_{kk}$ and
$|h_{kl}^{(i)}| = |h_{kl}|$ for $k = 1,\ldots,K$ and $l \ne k$, an
alignment set of size $m = 2^{K(K-1)}$ would consist of all possible
channel matrices which might align with $\mat{H}$, and so necessarily
must collide with any other alignment set.

A related issue is that of allowing decoding using \emph{all}
alignment sets of a particular size~$m$, of which there are
$\binom{m-1}{m/2}^{K(K-1)}$ such alignment sets.  For example, a
system could choose to perform first-to-complete alignment among
\emph{any} alignment set of sizes $2$ and $4$.  Because
non-intersection between different alignment sets may no longer be
guaranteed, the analysis will be more complicated.

From Table~\ref{table:sample_absorp_prob_delays}, we can start to
notice the potential for delay reduction via using multiple alignment
sets of the same size.  Although the delay will still scale linearly
in $|\Hcal|$, it is possible to significantly reduce the delay below
$d^{(2)} = |\Hcal|$.  As an example, from
Figure~\ref{fig:multiple_4_delay} we can observe the behavior of the
linear scaling factor, in the case of allowing alignment using more
and more size-$4$ alignment sets.\footnotemark ~Thus a deeper
consideration of alignment with multiple same-size alignment sets may
be a fruitful area for further inquiry.
\footnotetext{Of course, the trend shown in the
Figure~\ref{fig:multiple_4_delay} only holds for scenarios where the
number of users~$K$ is large enough that there exists enough distinct
alignment sets of size~$4$ for alignment.}

\begin{figure}[tbp]
\centering
\psfrag{num4}[c][c]{\scriptsize $n$}
\psfrag{delay}[c][c]{\scriptsize $d^I / |\Hcal|$}
\includegraphics[width=2.5in]{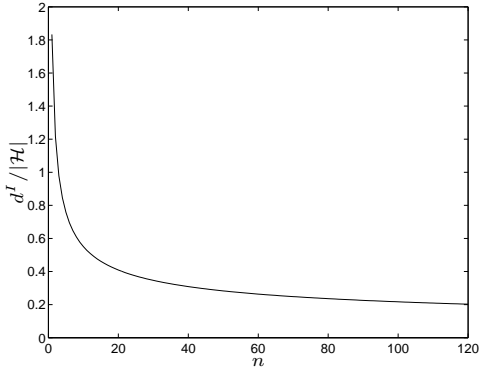}
\caption{Plot showing the decrease of the delay linear scaling factor,
for multiple disjoint alignment sets of size~$4$.  The number of
alignment sets of size~$4$ is $n$.  Thus each point represents the
delay associated with the tuple $I = (4,4,4,\ldots)$, where the tuple
has $n$ elements.}
\label{fig:multiple_4_delay}
\end{figure}

There are myriad other ways in which alignment may occur; \ie, there
is more than one way to align channel matrices.
Definition~\ref{def:align_set} gives one set of sufficient conditions
for channel realizations to align, in order to keep the analysis
tractable---and the benefits which arise by considering larger
alignment sets are already evident.  An obvious extension to this
would be to consider alignment sets in which arbitrary linear
combinations add up to multiples of the identity, and to only consider
alignment among subsets of users.  Subsequent work by
\cite{johnson:drt} takes a step in this direction.

The moral of this story, however, is that delay can always be reduced
by allowing alignment using a greater number of possible choices of
alignment sets.  The data rate may decrease correspondingly, so the
tradeoff needs to be appropriately chosen according to the needs of
the communication system.

\section{Conclusion}
\label{sec:conclusion}

In our analysis, we have not considered the delays between when a
message symbol is available and when it is first transmitted.  We have
only defined delay as the time between when the symbol is first
transmitted and when it is able to be recovered by the receiver.  We
believe this is a reasonable metric of delay, as long as message
symbols are not all generated at one time.  However, an analysis using
queueing theory may be necessary to verify this claim.

In this work, we have proposed an interference alignment scheme which
reduces delay, although with potentially decreased data rate.  Delay
is mitigated by allowing more ways to align interference---through the
utilization of larger alignment sets.  We have also introduced a
scheme to trade off the delay and rate.  In the end, even though the
rate may be reduced, we can still say, in the parlance of interference
aligners, that each person gets $\kappa$ of the cake, where $1/K \leq
\kappa \leq 1/2$---so our scheme can still be an improvement over
non-aligning channel-sharing strategies in terms of data rate.

\addtolength{\textheight}{-7cm}  


\appendix[Markov Chain Analysis]
\label{appendix:markov_chain_analysis}

We provide more details on computing the absorption probabilities and
hitting times from the Markov chain constructions of
Section~\ref{sec:align_multiple}, using techniques from
\cite{taylor:stochastic_modeling}.  Assume  there are a total
of $n$ states in the Markov chain, with $k$ transient states and $n-k$
absorbing states.  In the rest of the appendix, let $\vect{e}_i$
denote a vector consisting of all $0$'s except for a $1$ in the $i$-th
position (\ie, $\vect{e}_i$ is the canonical basis vector in the
$i$-th direction).  We let state $i=0$ be the initial state of the
Markov chain---with no alignment sets completed---so $\vect{e}_0$ is
the initial probability distribution.  Also, let $\ones$ be the
all-ones vector (of appropriate length).

Consider the $n \times n$ probability transition matrix $\mat{P}$,
with the $P_{ij}$ entry denoting the probability of transitioning from
state $i$ to state $j$.  Without loss of generality, we may re-order
the states so that the transient states are indexed first, and then
followed by the absorbing states.  Equivalently, we permute the rows
and columns of $\mat{P}$ to have the block upper-triangular form
$\mat{P} = \left[ \begin{array}{cc}  \mat{Q} & \mat{R} \\ \mat{0} &
\mat{I} \end{array}\right]$, where the block $\mat{Q}$ (of size $k
\times k$) corresponds to transition probabilities between transient
states and the block $\mat{R}$ (of size $k \times (n-k)$) corresponds
to transition probabilities from transient states to absorbing states.
(The lower-right block is the identity matrix since an absorbing state
can only transition to itself, and obviously the lower-left block is
all zeros since absorbing states can not transition to transient
states.)  As an example, if we consider the Markov chain of
Figure~\ref{fig:markov_chain_2_4} with re-ordered state vector
$\vect{s} = (s_0, s_1, s_2, s_3, s_{-1})$, then the permuted
probability transition matrix is
\[
\mat{P} = \left[ \begin{array}{ccccc}
1-\frac{4}{|\Hcal|} & \frac{3}{|\Hcal|} & 0 & 0 & \frac{1}{|\Hcal|} \\
0 & 1-\frac{3}{|\Hcal|} & \frac{2}{|\Hcal|} & 0 & \frac{1}{|\Hcal|} \\
0 & 0 & 1-\frac{2}{|\Hcal|} & \frac{1}{|\Hcal|} & \frac{1}{|\Hcal|} \\
0 & 0 & 0 & 1 & 0 \\
0 & 0 & 0 & 0 & 1
\end{array} \right]
\mbox{,}
\]
which evidently has the appropriate structure.

Expressions for the absorption probabilities and hitting times can be
derived using the various blocks of the probability transition matrix.
\begin{lemma}
\label{thm:markov_chain_absorption_probs}
Define the length-$(n-k)$ absorption probability vector
$\vect{\beta}$, where the $\beta_j$ entry is the probability of
becoming absorbed in state $j$.  Then
\[
\vect{\beta} = (\vect{e}_0^T (\mat{I} - \mat{Q})^{-1} \mat{R})^T
\mbox{.}
\]
\end{lemma}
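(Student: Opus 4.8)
The plan is to identify $\vect{\beta}$ as a row of a matrix of absorption probabilities and to derive a linear system for that matrix by first-step analysis. Define the $k \times (n-k)$ matrix $\mat{U}$ whose $(i,j)$ entry $U_{ij}$ is the probability that the chain, started in transient state $i$, is eventually absorbed in absorbing state $j$. Conditioning on the first transition out of state $i$: either the chain jumps directly into absorbing state $j$ (probability $R_{ij}$), or it moves to some transient state $l$ (probability $Q_{il}$) and must subsequently reach $j$, which by the Markov property contributes $Q_{il} U_{lj}$. Summing over the two cases gives $U_{ij} = R_{ij} + \sum_{l} Q_{il} U_{lj}$, i.e. $\mat{U} = \mat{R} + \mat{Q}\mat{U}$, equivalently $(\mat{I}-\mat{Q})\mat{U} = \mat{R}$.

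Next I would establish that $\mat{I}-\mat{Q}$ is invertible. Since the chain is finite and from every transient state there is a path of positive probability leading to an absorbing state, absorption occurs with probability $1$ and in finite expected time; hence the expected number of visits to transient states is finite, which forces $\mat{Q}^t \to \mat{0}$ as $t \to \infty$ (equivalently, the spectral radius of $\mat{Q}$ is strictly less than $1$). Consequently the Neumann series $\sum_{t \ge 0}\mat{Q}^t$ converges and equals $(\mat{I}-\mat{Q})^{-1}$, so the linear system has the unique solution $\mat{U} = (\mat{I}-\mat{Q})^{-1}\mat{R}$. For the specific success-runs and coupon-collecting chains of Section~\ref{sec:align_multiple}, the positivity of an absorbing path from each transient state is immediate once $|\Hcal|$ is large enough that the relevant matrices can all occur, so this step is routine in our setting.

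Finally, since the chain starts in state $i=0$ with initial distribution $\vect{e}_0$, the probability of absorption in state $j$ is $\beta_j = \sum_i (\vect{e}_0)_i\,U_{ij} = (\vect{e}_0^T \mat{U})_j$. Writing this in vector form yields $\vect{\beta}^T = \vect{e}_0^T (\mat{I}-\mat{Q})^{-1}\mat{R}$, i.e. $\vect{\beta} = (\vect{e}_0^T (\mat{I}-\mat{Q})^{-1}\mat{R})^T$, as claimed.

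The only real subtlety I anticipate is the invertibility of $\mat{I}-\mat{Q}$ — concretely, justifying $\mat{Q}^t \to \mat{0}$ — since the first-step recursion and the final bookkeeping are purely mechanical. This subtlety is mild here: the transition structure visibly provides, from each transient state, a strictly positive probability of reaching an absorbing state within a bounded number of steps, which pins the spectral radius of $\mat{Q}$ strictly below $1$.
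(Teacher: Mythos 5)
Your proposal is correct and follows essentially the same route as the paper: define the transient-to-absorbing matrix $\mat{U}$, derive $\mat{U} = \mat{Q}\mat{U} + \mat{R}$ by first-step analysis, invert $\mat{I}-\mat{Q}$, and left-multiply by $\vect{e}_0^T$. The only difference is that you spell out why $\mat{I}-\mat{Q}$ is invertible (spectral radius of $\mat{Q}$ below one), whereas the paper simply cites the well-definedness of the fundamental matrix; this is a welcome but minor elaboration.
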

\begin{proof}
We consider the $k \times (n-k)$ transient-to-absorbing matrix
$\mat{U}$, where the $U_{ij}$ entry denotes the probability of
starting in transient state $i$ and ultimately becoming absorbed in
absorbing state $j$.  By first-step analysis, $\mat{U}$ satisfies the
recursion $\mat{U} = \mat{Q} \mat{U} + \mat{R}$, so $\mat{U} =
(\mat{I}-\mat{Q})^{-1} \mat{R}$.  Because $\mat{Q}$ represents the
probabilities of transitioning between transient states, $(\mat{I} -
\mat{Q})^{-1}$ is the fundamental matrix and is well-defined.  Then
$\beta_j$ is the probability of starting in state $0$ and eventually
becoming absorbed in state $j$, so $\vect{\beta} = (\vect{e}_0^T
\mat{U})^T = (\vect{e}_0^T (\mat{I} - \mat{Q})^{-1} \mat{R})^T$.
\end{proof}

\begin{lemma}
\label{thm:markov_chain_hitting_time}
The hitting time (\ie, the time until absorption in \emph{any}
absorption state) is given by
\[
d = \vect{e}_0^T (\mat{I} - \mat{Q})^{-1} \ones
\mbox{.}
\]
\end{lemma}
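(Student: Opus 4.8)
The plan is to prove the hitting-time formula $d = \vect{e}_0^T (\mat{I} - \mat{Q})^{-1} \ones$ by first-step analysis, in direct parallel to the proof of Lemma~\ref{thm:markov_chain_absorption_probs}. First I would define, for each transient state $i$, the quantity $d_i$ as the expected number of steps until absorption starting from state $i$, and collect these into the length-$k$ vector $\vect{d}$. Since the chain spends one step in the current transient state and then moves according to $\mat{P}$, conditioning on the first transition gives $d_i = 1 + \sum_{j} Q_{ij} d_j$ (the absorbing states contribute nothing to the sum because reaching one terminates the count), which in vector form is the recursion $\vect{d} = \ones + \mat{Q} \vect{d}$.

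Next I would solve this recursion. Because $(\mat{I} - \mat{Q})$ is invertible — this is exactly the fact, already invoked in Lemma~\ref{thm:markov_chain_absorption_probs}, that $(\mat{I}-\mat{Q})^{-1}$ is the well-defined fundamental matrix of an absorbing chain — we get $\vect{d} = (\mat{I} - \mat{Q})^{-1} \ones$. Finally, the hitting time of interest is the expected absorption time starting from the designated initial state $i = 0$, which is the zeroth component of $\vect{d}$, namely $d = \vect{e}_0^T \vect{d} = \vect{e}_0^T (\mat{I} - \mat{Q})^{-1} \ones$, as claimed.

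There is essentially no serious obstacle here; the only point requiring care is the justification that $\mat{I} - \mat{Q}$ is nonsingular and that each $d_i$ is finite, so that the manipulations are legitimate. This follows because from every transient state there is positive probability of reaching an absorbing state (the state graph in Section~\ref{sec:align_multiple} is a success-runs / coupon-collecting chain in which every transient state has a path to $V$), so $\mat{Q}^n \to \mat{0}$ and $(\mat{I}-\mat{Q})^{-1} = \sum_{n \geq 0} \mat{Q}^n$ converges with nonnegative entries; applying it to $\ones$ gives finite expected hitting times. I would simply cite \cite{taylor:stochastic_modeling} for this standard fact rather than reprove it, mirroring how Lemma~\ref{thm:markov_chain_absorption_probs} treats the fundamental matrix.
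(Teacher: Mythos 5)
Your proof is correct and follows essentially the same route as the paper: first-step analysis yielding the recursion $\vect{D} = \mat{Q}\vect{D} + \ones$, inversion of $\mat{I}-\mat{Q}$ via the fundamental matrix, and extraction of the initial-state component with $\vect{e}_0^T$. The extra justification you give for the nonsingularity of $\mat{I}-\mat{Q}$ and the finiteness of the expected hitting times is a welcome addition but not a departure from the paper's argument.
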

\begin{proof}
Let $\vect{D}$ be the length-$k$ vector where the $D_i$ entry is the
hitting time when starting in transient state $i$.  Then first-step
analysis gives the recursion $\vect{D} = \mat{Q} \vect{D} + \ones$, so
$\vect{D} = (\mat{I} - \mat{Q})^{-1} \ones$.  The overall hitting time
is then $d = \vect{e}_0^T \vect{D} = \vect{e}_0^T (\mat{I} -
\mat{Q})^{-1} \ones$.
\end{proof}

Suppose one employs the first-to-complete alignment scheme with
alignment sets of sizes $I = (m_1,m_2,\ldots,m_{|I|})$, and where the
first alignment set has size $m_1 = 2$.  Here we prove that the mean
time to absorption of the Markov chain is equal to the number of
possible channel fading matrices multiplied by the probability of
completion using the set of size~$2$.
\begin{theorem}
\label{thm:size_2_hitting_time}
If $2 \in I$, then $d^I = \beta_2^I |\Hcal|$.
\end{theorem}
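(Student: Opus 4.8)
The plan is to read the result straight off Lemmas~\ref{thm:markov_chain_absorption_probs} and~\ref{thm:markov_chain_hitting_time}, once we observe that the transition probability into the absorbing state ``size-$2$ set completed first'' is the same constant $1/|\Hcal|$ from \emph{every} transient state. Order the alignment sets so that $m_1 = 2$, and write $\mat{r}_2$ for the column of $\mat{R}$ indexed by the absorbing state in which the size-$2$ alignment set is the one that completes.

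First I would argue that every transient state $\vect{s}$ has $s_1 = 0$. The counter $s_1$ is incremented only when the single remaining member of the size-$2$ set (the complement $\mat{H}^c$) is realized; but as soon as that happens $s_1$ reaches $m_1 - 1 = 1$, which is exactly the absorbing state corresponding to completion of that set. (This uses the standing assumption that the alignment sets are non-overlapping, so realizing $\mat{H}^c$ advances no other counter.) Hence no transient state has $s_1 > 0$.

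Substituting $s_1 = 0$ into the transition-probability formula of Section~\ref{subsec:align_multiple_larger_align_sets}, the probability of moving from any transient state $\vect{s}$ to the ``size-$2$ completed'' absorbing state is $(m_1 - 1 - s_1)/|\Hcal| = 1/|\Hcal|$, independent of $\vect{s}$. Therefore $\mat{r}_2 = \frac{1}{|\Hcal|}\,\ones$, where $\ones$ has length equal to the number of transient states.

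It then remains only to combine the two appendix lemmas. By Lemma~\ref{thm:markov_chain_absorption_probs}, $\beta_2^I = \vect{e}_0^T (\mat{I} - \mat{Q})^{-1} \mat{r}_2 = \frac{1}{|\Hcal|}\,\vect{e}_0^T (\mat{I} - \mat{Q})^{-1} \ones$, and by Lemma~\ref{thm:markov_chain_hitting_time} the quantity $\vect{e}_0^T (\mat{I} - \mat{Q})^{-1} \ones$ is precisely the hitting time $d^I$. Hence $\beta_2^I = d^I / |\Hcal|$, that is, $d^I = \beta_2^I |\Hcal|$. The only real content is the observation that $\mat{r}_2$ is a constant multiple of the all-ones vector; everything after that is substitution. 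The point requiring care is the disjointness hypothesis on the alignment sets, since it is what forces $s_1 \in \{0,1\}$ and hence makes the size-$2$ absorbing column exactly $\tfrac{1}{|\Hcal|}\ones$ rather than merely bounded by it.
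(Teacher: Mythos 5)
Your proof is correct and follows essentially the same route as the paper's: both arguments hinge on observing that the column of $\mat{R}$ for the size-$2$ completion state equals $\frac{1}{|\Hcal|}\ones$, and then reading off $\beta_2^I = \frac{1}{|\Hcal|} \vect{e}_0^T(\mat{I}-\mat{Q})^{-1}\ones = d^I/|\Hcal|$ from Lemmas~\ref{thm:markov_chain_absorption_probs} and~\ref{thm:markov_chain_hitting_time}. Your explicit justification that $s_1 = 0$ in every transient state is a slightly more careful rendering of the paper's one-line claim that the transition probability into that absorbing state is $1/|\Hcal|$ from any transient state.
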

\begin{proof}
Let $\hat{\jmath}$ be the state associated with the realization of the
channel complement (\ie, the state associated with completing the
size-$2$ alignment set).  We assume that each channel realization is
equally likely with probability $1/|\Hcal|$, so the probability of
transitioning into state $\hat{\jmath}$ is $1/|\Hcal|$ starting from
any [transient] state.  From
Lemma~\ref{thm:markov_chain_absorption_probs} and since the
$\hat{\jmath}$-th column of $\mat{R}$ is $(1/|\Hcal|) \ones$, we see
that $\beta_2^I = \beta_{\hat{\jmath}} = (1/|\Hcal|) \vect{e}_0^T
(\mat{I} - \mat{Q})^{-1} \ones$.  Since $d^I = \vect{e}_0^T (\mat{I} -
\mat{Q})^{-1} \ones$, the result follows.
\end{proof}

\section*{Acknowledgements}
\label{sec-ack}

This material is based upon work performed while J.~Koo was supported
by AFOSR grant FA9550-06-1-0312 and W.~Wu was supported by the
Frank~H.~Buck Scholarship.  The authors would like to thank Michael
Gastpar for helpful comments regarding the ergodic alignment scheme
and for pointers to related work.  The authors would also like to
acknowledge anonymous reviewers, whose comments helped to improve the
clarity of the exposition.

\bibliographystyle{IEEEtran}
\bibliography{IEEEabrv,references}

\end{document}